\newtheorem{claim}{}[section]
\newtheorem{theorem}[claim]{Theorem}
\newtheorem{lemma}[claim]{Lemma}
\newtheorem{proposition}[claim]{Proposition}
\theoremstyle{remark}
\renewenvironment{proof}{\noindent{\it Proof. \hskip0pt}}
                      {$\square$\par\medskip}
\begin{document}
\baselineskip 6.0 truemm
\parindent 1.5 true pc

\newcommand\lan{\langle}
\newcommand\ran{\rangle}
\newcommand\tr{{\text{\rm Tr}}\,}
\newcommand\ot{\otimes}
\newcommand\ol{\overline}
\newcommand\join{\vee}
\newcommand\meet{\wedge}
\renewcommand\ker{{\text{\rm Ker}}\,}
\newcommand\image{{\text{\rm Im}}\,}
\newcommand\id{{\text{\rm id}}}
\newcommand\tp{{\text{\rm tp}}}
\newcommand\pr{\prime}
\newcommand\e{\epsilon}
\newcommand\la{\lambda}
\newcommand\inte{{\text{\rm int}}\,}
\newcommand\ttt{{\text{\rm t}}}
\newcommand\spa{{\text{\rm span}}\,}
\newcommand\conv{{\text{\rm conv}}\,}
\newcommand\rank{\ {\text{\rm rank of}}\ }
\newcommand\re{{\text{\rm Re}}\,}
\newcommand\ppt{\mathbb T}
\newcommand\rk{{\text{\rm rank}}\,}
\newcommand\SN{{\text{\rm SN}}\,}
\newcommand\SR{{\text{\rm SR}}\,}
\newcommand\HA{{\mathcal H}_A}
\newcommand\HB{{\mathcal H}_B}
\newcommand\HC{{\mathcal H}_C}
\newcommand\CI{{\mathcal I}}
\newcommand{\bra}[1]{\langle{#1}|}
\newcommand{\ket}[1]{|{#1}\rangle}
\newcommand\cl{\mathcal}
\newcommand\idd{{\text{\rm id}}}
\newcommand\OMAX{{\text{\rm OMAX}}}
\newcommand\OMIN{{\text{\rm OMIN}}}
\newcommand\diag{{\text{\rm Diag}}\,}
\newcommand\calI{{\mathcal I}}
\newcommand\bfi{{\bf i}}
\newcommand\bfj{{\bf j}}
\newcommand\bfk{{\bf k}}
\newcommand\bfl{{\bf l}}
\newcommand\bfp{{\bf p}}
\newcommand\bfq{{\bf q}}
\newcommand\bfzero{{\bf 0}}
\newcommand\bfone{{\bf 1}}
\newcommand\im{{\mathcal R}}
\newcommand\ha{{\frac 12}}
\newcommand\xx{{\text{\sf X}}}
\newcommand\sa{{\text{\rm sa}}}
\newcommand\Pz{P^{\rm z}}
\newcommand\Pn{P^{\rm n}}

\title{Three qubit separable states of length ten\\
with unique decompositions}

\author{Seung-Hyeok Kye}
\address{Department of Mathematics and Institute of Mathematics, Seoul National University, Seoul 151-742, Korea}
\email{kye at snu.ac.kr}
\thanks{partially supported by NRF-2017R1A2B4006655.}

\subjclass{81P15, 15A30, 46L07}

\keywords{lengths of separable states, three qubit states, unique decomposition, positive multi-linear maps,
dual faces}

\begin{abstract}
We construct one parameter families of three qubit separable states with length ten, which is strictly greater than the whole
dimension eight. These states are located on the boundary of the convex set of all separable states, but they are in the interior
of the convex set of all states with positive partial transposes. They are also decomposed into the convex sum of ten
pure product states in a unique way.
\end{abstract}

\maketitle

\section{Introduction}

Entanglement is one of the key notions in the current quantum
information theory, and it is an important research topic to
distinguish entanglement from separability. Recall that a
multi-partite state is said to be separable if it is a convex
combination of pure product states, and entangled if it is not
separable. The length of separable states is one of the notions to
understand the convex structures of the convex set $\mathcal S$
consisting of all separable states, as it was considered by several
authors
\cite{{DiVin},{p-horo},{lockhart},{sanpera-t-g},{sko_10},{uhlmann},{wootters}}
in the early stages of quantum information theory.

The {\sl length} $\ell(\varrho)$ of a separable state $\varrho$ is
defined by the smallest number of pure product states with which
$\varrho$ may be expressed as a convex combination. Physically, it
represents the minimal physical effort to implement the state. It is
clear that the length $\ell(\varrho)$ is bigger than or equal to the
rank of $\varrho$. It was known in \cite{DiVin} that the length may
be strictly bigger than the rank. On the other hand, it was shown
\cite{chen_dj_2xd} for $2\otimes 3$ bi-partite case that the length
$\ell(\varrho)$ of a separable state $\varrho$ coincides with the
maximum of the ranks of $\varrho$ and its partial transpose
$\varrho^\Gamma$. This is not the case for the $3\otimes 3$ system,
by examples \cite{ha-kye-sep-face} of separable states $\varrho$
with $\ell(\varrho)=6$ but $\rk \varrho=\rk\varrho^\Gamma=5$.

The length cannot exceed the whole affine dimension of the system by the classical
Carat\' eodory theorem, as it was observed in \cite{p-horo}. One natural question is to find the possible maximum of lengths.
It was known \cite{sanpera-t-g} that $\ell(\varrho)\le 4$ for every $2\otimes 2$ separable state $\varrho$. The above mentioned result
\cite{chen_dj_2xd} on $2\otimes 3$ separable states tells us $\ell(\varrho)\le 6$ for those separable states.
On the other hand, it was shown in \cite{chen_dj_semialg} that the length of an $m\otimes n$ state may exceeds $mn$
whenever $(m-2)(n-2)>1$, or equivalently $\max\{m,n\}\ge 4$ and $\min\{m,n\}\ge 3$.
In the $3\otimes 3$ and $2\otimes 4$ systems, examples of separable states of length $10$ were found in
\cite{ha_kye_exposedness_length,ha-kye-sep-face} and \cite{ha-kye-sep_2x4}, respectively, and examples
in \cite{ha-kye-sep-face} have been analyzed in \cite{chen_dj_boundary}.
We refer to \cite{{chen_dj_maximal face},chen_dj_length_fil} for further results on the related topics.

In this paper, we consider the three qubit system, the simplest multi-partite case. A result in
\cite{chen_dj_semialg} tells us that there must exist a three qubit separable state of length ten.
The main purpose of this paper is to construct explicit examples of such states.
Lengths of some three qubit separable states have been calculated in \cite{han_kye_phase} but do not exceed eight, even though some of them
exceed the maximum ranks  of partial transposes. The states we constructed turn out to be boundary separable states
of full ranks in the sense of \cite{chen_dj_boundary}, that is, it is on the boundary of the convex set $\mathcal S$,
and all the partial transposes have full ranks. Recall that all the partial transposes of a state have full ranks if and only if it is
in the interior of the convex set $\mathcal T$ of all PPT states.

The main idea is to use the duality
\cite{kye_3qb_EW} between $n$-partite separable states
and positive multi-linear maps.
For our purpose, we consider the positive bi-linear maps in $2\times 2$ matrices
which have been constructed in \cite{kye_3qb_EW,kye_3q_exposed}, and mix variants of them
to get positive bi-linear maps whose dual faces have exactly ten pure product states.
The positive maps constructed have the full spanning property, and so, the interior of the dual faces
are located in the interior of the convex cone of all PPT states. See \cite{kye_3qb_EW}.
Furthermore, the faces are affinely isomorphic
to the $9$-dimensional simplex with ten extreme points, and so any interior points of the faces are separable states
of length ten and are decomposed into the sum of pure product states in a unique way.

Separable states with unique decompositions also have been studied by several authors
\cite{{alfsen},{alfsen_2},{chen_dj_ext_PPT},{cohen},{ha-kye-sep_2x4},{ha-kye-sep-face},{kirk}}.
The results tell us that separable states generically have unique decompositions if the lengths are sufficiently small.
In the multi-partite case of $d_1\ot \cdots\ot d_n$ system, it was shown in \cite{ha-kye-multi-unique} that
generic choices of $k$ product vectors give rise to separable states with unique decompositions whenever $k\le \sum_{i=1}^n (d_i-1)$,
by taking convex sums of corresponding pure product states. In the three qubit case,
generic choices of four product vectors still give rise to separable states with unique decompositions.
We know \cite{juhan,walgate} that generic five dimensional subspaces of $\mathbb C^2\ot\mathbb C^2\ot\mathbb C^2$  have
six product vectors. It was also shown \cite{ha-kye-multi-unique}
that they make three qubit separable states of length six with unique decompositions.
Our construction gives examples of three qubit separable states of length ten with unique decompositions.

We collect in the next section basic material about the map in
\cite{kye_3qb_EW}, and construct in Section 3 the positive bi-linear
map we are looking for.

The author is grateful to Kyung Hoon Han for valuable comments and discussion.

\section{Boundary separable states with full ranks}

For a given multi-linear map $\phi$ from $M_{d_1}\times \cdots\times M_{d_{n-1}}$ into $M_{d_n}$,
we associate the Choi matrix $W_\phi$ in $M_{d_1}\otimes \cdots\otimes M_{d_{n-1}}\otimes M_{d_n}$ by
$$
W_\phi =\sum_{i_1,j_1,\dots, i_{n-1},j_{n-1}}
|i_1\ran\lan j_1|\otimes \cdots\otimes |i_{n-1}\ran\lan j_{n-1}|\ot \phi(|i_1\ran\lan j_1|,\cdots, |i_{n-1}\ran\lan j_{n-1}|),\\
$$
and define the bi-linear pairing
$$
\lan \varrho,\phi\ran = {\text{\rm Tr}}(\varrho W^\ttt_\phi)
$$
for an $(n-1)$-linear map $\phi$ and an $n$-partite state $\varrho$, where $W^\ttt_\phi$ denotes the transpose
of $W_\phi$. We recall that an $(n-1)$-linear map is positive if $\phi(x_1,\dots,x_{n-1})$ is positive (semi-definite)
whenever all of $x_1,\dots, x_{n-1}$ are positive.
It was shown in \cite{kye_3qb_EW} that $\varrho$ is separable if and only if
$\lan\varrho,\phi\ran\ge 0$ for every positive multi-linear map $\phi$. This tells us that every entanglement witness
must be the Choi matrix $W_\phi$ of a positive multi-linear map $\phi$.

We have constructed in \cite{kye_3qb_EW} the positive bilinear map
$\phi:M_2\times M_2\to M_2$ which sends $([x_{ij}], [y_{ij}])\in
M_2\times M_2$ to
$$
\left(\begin{matrix}
sx_{22}y_{11}&
x_{12}y_{12}-x_{12}y_{21}+x_{21}y_{12}+x_{21}y_{21}\\
x_{12}y_{12}+x_{12}y_{21}-x_{21}y_{12}+x_{21}y_{21}
&t x_{11}y_{22}
\end{matrix}\right)\in M_2,
$$
where $s,t$ are positive numbers with $st=8$. See also
\cite{kye_3q_exposed} for the motivation of construction. We use the
parameter $u>0$ to express the the corresponding entanglement
witness, or equivalently the corresponding Choi matrix:
$$
W=\left(\begin{matrix}
\cdot&\cdot&\cdot&\cdot&\cdot&\cdot&\cdot &1\\
\cdot&\cdot&\cdot&\cdot&\cdot&\cdot&1&\cdot\\
\cdot&\cdot&\cdot&\cdot&\cdot&-1&\cdot&\cdot\\
\cdot&\cdot&\cdot&\sqrt 8 u^{-1}&1&\cdot&\cdot&\cdot\\
\cdot&\cdot&\cdot&1&\sqrt 8 u&\cdot&\cdot&\cdot\\
\cdot&\cdot&-1&\cdot&\cdot&\cdot&\cdot&\cdot\\
\cdot&1&\cdot&\cdot&\cdot&\cdot&\cdot&\cdot\\
1&\cdot&\cdot&\cdot&\cdot&\cdot&\cdot&\cdot
\end{matrix}\right),
$$
with respect to the lexicographic order $000$, $001$, $010$, $011$, $100$,
$101$, $110$, $111$. It was shown \cite{kye_3qb_EW} that the map
$\phi$ is indecomposable positive bi-linear map. In fact, $\phi$ has
the full spanning property which implies indecomposability. See also
\cite{han_kye_genuine} for indecomposability. More recently, the
author showed \cite{kye_3q_exposed} that the map $\phi$ generates an
exposed extreme ray of the cone of all positive bi-linear maps between
$2\times 2$ matrix algebras.

We note that all the entries of $W$ are zero except for diagonal and anti-diagonals. Such matrices
are called {\sf X}-shaped, and are of the form
$$
X(a,b,c)= \left(
\begin{matrix}
a_1 &&&&&&& c_1\\
& a_2 &&&&& c_2 & \\
&& a_3 &&& c_3 &&\\
&&& a_4&c_4 &&&\\
&&& \bar c_4& b_4&&&\\
&& \bar c_3 &&& b_3 &&\\
& \bar c_2 &&&&& b_2 &\\
\bar c_1 &&&&&&& b_1
\end{matrix}
\right),
$$
for vectors
$a,b\in\mathbb R^4$ and $c\in\mathbb C^4$. Many important three qubit states arise in this form. For example,
Greenberger-Horne-Zeilinger diagonal states are {\sf X}-shaped, and $\varrho=X(a,b,c)$ is GHZ diagonal
if and only if $a=b$ and $c\in\mathbb R^4$. With this notation, we note that the above entanglement witnesses $W$ is given by
$$
W=X((0,0,0,\sqrt 8u^{-1}), (0,0,0,\sqrt 8 u),(1,1,-1,1)).
$$

We consider the dual face $W^\prime$ of $W$, which consists of all separable states $\varrho$ such that $\lan W,\varrho\ran=0$.
In order to understand the structures of the dual face, it is desirable to find all the extreme points of the face,
that is, pure product states $\varrho=|\xi\ran\lan\xi|$ satisfying the relation $\lan W,\varrho\ran=0$.
We have found in \cite{kye_3q_exposed} all such product vectors $|\xi\ran$'s.
Note that such a product vector $|\xi\ran$ satisfies the relation
$$
\lan\bar\xi |W|\bar\xi\ran =\lan W,|\xi\ran\lan\xi|\ran = 0.
$$
In this case, we say that the witness $W$ {\sl kills} the product vector $|\xi\ran$, and denote by $P_W$ the set of
all product vectors killed by $W$:
$$
P_W:=\{|\xi\ran=|x\ran\ot |y\ran\ot |z\ran: \lan W,|\xi\ran\lan\xi|\ran =0\}.
$$
Our discussion tells us that there is a one-to-one correspondence
between $P_W$ and the set of all extreme points of the dual face $W^\prime$ of $W$.

We denote by $\Pz_W$ the set of all product vectors in $P_W$ which has zero entries, and by
$P_W^{\rm n}$ the complement $P_W\setminus P_W^{\rm z}$.
For $i,j=0,1$, we denote by $E_{\square ij}$ the set of all three qubit product vectors of the form $|x\ran\ot |i\ran \ot |j\ran$
with $|x\ran\in\mathbb C^2$. We also define $E_{i\square j}$ and $E_{ij\square}$ similarly:
$$
\begin{aligned}
E_{\square ij}&=\{|x\ran \otimes |i\ran \otimes |j\ran: |x\ran\in\mathbb C^2\},\\
E_{i\square j}&=\{|i\ran \otimes |x\ran \otimes |j\ran: |x\ran\in\mathbb C^2\},\\
E_{ij\square }&=\{|i\ran \otimes |j\ran \otimes |x\ran: |x\ran\in\mathbb C^2\}.
\end{aligned}
$$
Note that all the above sets are parameterized by the two dimensional sphere, up to scalar multiplications. It is easy to see that
the set $\Pz_W$ of product vectors in $P_W$ with zero entries is given by
\begin{equation}\label{pzw}
\Pz_W=E_{\square 01}\cup E_{\square 10}\cup E_{0\square 0}\cup E_{1\square 1}\cup E_{00\square}\cup E_{11\square}.
\end{equation}

We proceed to look for product vectors belonging to $P_W^{\text {\rm n}}$.
For a given triplet ${\bf p}=(p,q,r)\in\mathbb R_+^3$ of positive numbers and $\Lambda=(\alpha,\beta,\gamma)\in\mathbb T^3$, we define
the three qubit product vector $|\eta({\bf p},\Lambda)\ran$ by
$$
|\eta({\bf p},\Lambda)\ran=(pqr)^{-\frac 12}
(p,\alpha)^\ttt\otimes (q,\beta)^\ttt\otimes (r,\gamma)^\ttt\in\mathbb C^2\otimes\mathbb C^2\otimes \mathbb C^2.
$$
Denote by $\omega=e^{{\rm i}\frac \pi 4}$ the eighth root of unity, and take eight $\Lambda$'s in $\mathbb T^3$ as follows:
$$
\begin{aligned}
\Lambda_1&=(+\omega^3,+\omega^1,+\omega^7),\\
\Lambda_2&=(+\omega^3,-\omega^1,-\omega^7),\\
\Lambda_3&=(-\omega^3,+\omega^1,-\omega^7),\\
\Lambda_4&=(-\omega^3,-\omega^1,+\omega^7),\\
\Lambda_j&=\bar\Lambda_{j-4},\quad j=5,6,7,8.
\end{aligned}
$$
It was shown in \cite{kye_3q_exposed} that a product vector without zero entries belongs to $P_W$
if and only if it is of the form
$$
|\eta_j({\bf p})\ran=:|\eta({\bf p},\Lambda_j)\ran
$$
up to scalar multiplication for some $j=1,\dots,8$ and ${\bf p}=(p,q,r)\in\mathbb R_+^3$ satisfying the relation $pq^{-1}r^{-1}=u$,
or equivalently
$$
pq^{-1}=ur,\qquad
qr=pu^{-1}\quad {\text{\rm or}}\quad
rp^{-1}=q^{-1}u^{-1}.
$$
For a given subset $\Sigma\subset \mathbb R^3_+$, we also define the set $F_\Sigma$ of product vectors by
$$
F_{\Sigma}= \{|\eta_j({\bf p})\ran: {\bf p}\in \Sigma, \ j=1,2,\dots,8\}.
$$
We have seen that the set $\Pn_W$ of product vectors in $P_W$ without zero entries is given by
$F_S$ with
$$
S=\{(p,q,r)\in\mathbb R_+^3: pq^{-1}r^{-1}=u\}.
$$
In short, we see that $P_W$ consists of product vectors in (\ref{pzw}) and those in $F_S$.

We fix a point ${\bf p}$ in the surface $S$, and
write $\varrho_j=:|\eta_j({\bf p})\ran  \lan \eta_j({\bf p})|$ for $j=1,2,\dots,8$.
By the argument in Section III of \cite{han_kye_phase}, we see that
the pure product states $\varrho_1, \varrho_2,\varrho_3$ and $\varrho_4$ share the common {\sf X}-part,
which is a separable {\sf X}-state of rank four. Furthermore, this {\sf X}-state is uniquely decomposed into the average of
$\varrho_1, \varrho_2,\varrho_3$ and $\varrho_4$. See \cite{ha_han_kye_multi_qubit} for multi-qubit analogue.
By a direct calculation, we have
$$
\varrho_9
:=\frac 14(\varrho_1+\varrho_2+\varrho_3+\varrho_4)
=X(a_{\bfp},b_\bfp, (\omega^5, \omega^3,\omega^7,\omega^5)),
$$
with the notations
$$
a_\bfp= (p^2u^{-1}, q^2u, r^2u, u),
\qquad
b_\bfp=(p^{-2}u, q^{-2}u^{-1}, r^{-2}u^{-1}, u^{-1}).
$$
On the other hand, the states $\varrho_5,\varrho_6,\varrho_7$ and $\varrho_8$ share the {\sf X}-part
$$
\varrho_{10}
:=\frac 14(\varrho_5+\varrho_6+\varrho_7+\varrho_8)
=X(a_\bfp, b_\bfp, (\omega^3, \omega^5,\omega,\omega^3))
$$
which is again a rank four separable state with unique decomposition.
Note that the average of all of them is given by
\begin{equation}\label{varrho_p}
\varrho_{\bf p}
:=\frac 12(\varrho_9+\varrho_{10})=\frac 18\sum_{j=1}^8\varrho_j
=X(a_\bfp, b_\bfp, \textstyle{\frac 1{\sqrt2}}(-1, -1,+1,-1)).
\end{equation}
Comparing the diagonal and anti-diagonal entries, we see that $\varrho_{\bf p}$ has the full rank eight.
This shows that the eight product vectors $|\eta_j({\bf p})\ran$ with $k=1,2,\dots.8$ make a basis of
$\mathbb C^2\ot\mathbb C^2\ot\mathbb C^2$ for each fixed ${\bf p}\in S$.

We denote by $\Gamma_A$, $\Gamma_B$ ad $\Gamma_C$ the partial transposes
with respect to the $A$, $B$ and $C$ parties, respectively. Then we have
$$
\begin{aligned}
X(a,b,c)^{\Gamma_A}&=X(a,b,(\bar c_4,\bar c_3,\bar c_2,\bar c_1)),\\
X(a,b,c)^{\Gamma_B}&=X(a,b,(c_3,c_4,c_1,c_2),\\
X(a,b,c)^{\Gamma_C}&=X(a,b,(c_2,c_1,c_4,c_3).
\end{aligned}
$$
Therefore, we see that all the partial transposes
$$
\begin{aligned}
(\varrho_{\bf p})^{\Gamma_A}
&=X(a_\bfp, b_\bfp, \textstyle{\frac 1{\sqrt2}}(-1, +1,-1,-1)),\\
(\varrho_{\bf p})^{\Gamma_B}
&=X(a_\bfp, b_\bfp,  \textstyle{\frac 1{\sqrt2}}(+1, -1,-1,-1)),\\
(\varrho_{\bf p})^{\Gamma_C}
&=X(a_\bfp, b_\bfp,  \textstyle{\frac 1{\sqrt2}}(-1, -1,-1,+1))
\end{aligned}
$$
also have the full ranks. Hence, we have the following:

\begin{proposition}\label{full-ranks}
The state $\varrho_{\bf p}$ is a boundary separable state with full ranks
for every ${\bf p}$ on the surface $S$.
Especially, eight product vectors $\{|\eta_j({\bf p})^\gamma\ran:j=1,2,\dots,8\}$ form a basis for each
partial conjugate operation $\gamma$ and $\bfp\in S$.
\end{proposition}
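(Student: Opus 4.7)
The proposition has two essentially independent claims: (i) $\varrho_{\bf p}$ is a boundary point of $\mathcal S$ while all its partial transposes have full rank, and (ii) the eight partially-conjugated product vectors $|\eta_j({\bf p})^\gamma\ran$ form a basis. The plan is to read everything off the explicit {\sf X}-state formulas already collected in this section.

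First, boundary in $\mathcal S$. Each $\varrho_j$ for $j=1,\dots,8$ is a pure product state in $P_W$, so $\lan W,\varrho_j\ran=0$; by linearity $\lan W,\varrho_{\bf p}\ran=0$. Since $W=W_\phi$ is the Choi matrix of a positive bi-linear map, the duality result from \cite{kye_3qb_EW} recalled in the previous section gives $\lan W,\sigma\ran\ge 0$ for every separable $\sigma$. Thus $\varrho_{\bf p}$ realizes equality in a supporting inequality for $\mathcal S$, placing it on $\partial\mathcal S$.

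Next, full ranks. For an {\sf X}-state $X(a,b,c)$, positivity reduces to positivity of each $2\times 2$ anti-diagonal block $\bigl(\begin{smallmatrix} a_i & c_i \\ \bar c_i & b_i\end{smallmatrix}\bigr)$, and full rank is equivalent to $a_ib_i-|c_i|^2>0$ for every $i$. From $a_\bfp=(p^2u^{-1},q^2u,r^2u,u)$ and $b_\bfp=(p^{-2}u,q^{-2}u^{-1},r^{-2}u^{-1},u^{-1})$ one reads off $a_ib_i=1$ for $i=1,2,3,4$. The anti-diagonal vector of $\varrho_{\bf p}$ in (\ref{varrho_p}) has each entry of modulus $\tfrac{1}{\sqrt 2}$, and the partial-transpose formulas $X(a,b,c)^{\Gamma_A}, X(a,b,c)^{\Gamma_B}, X(a,b,c)^{\Gamma_C}$ only permute and conjugate the entries of $c$, so the diagonals $a_\bfp,b_\bfp$ are untouched and each $|c_i|$ remains $\tfrac{1}{\sqrt 2}$. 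Therefore $a_ib_i-|c_i|^2=\tfrac 12>0$ uniformly, giving full rank $8$ for $\varrho_{\bf p}$ and for each $(\varrho_{\bf p})^{\Gamma_\bullet}$.

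Finally, the basis statement. For any partial conjugation $\gamma$ (including the trivial one), pure product states transform as $(\,|\xi\ran\lan\xi|\,)^\gamma=|\xi^\gamma\ran\lan\xi^\gamma|$, so
\[
(\varrho_{\bf p})^\gamma=\tfrac 18\sum_{j=1}^{8}|\eta_j(\bfp)^\gamma\ran\lan\eta_j(\bfp)^\gamma|.
\]
Having just shown that the left-hand side has rank $8$ on the $8$-dimensional space $\mathbb C^2\ot\mathbb C^2\ot\mathbb C^2$, the image of the right-hand side is the span of the $|\eta_j(\bfp)^\gamma\ran$; eight vectors spanning an eight-dimensional space are a basis. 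I do not foresee a serious obstacle: the whole argument is a direct reading of the explicit {\sf X}-state data, with the one crucial arithmetic input being the cancellation $a_\bfp(i)\,b_\bfp(i)=1$, which is precisely what the form of $a_\bfp,b_\bfp$ is designed to produce.
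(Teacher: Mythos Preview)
Your proof is correct and follows essentially the same route as the paper: the text preceding the proposition already observes that $\varrho_{\bf p}$ lies in the dual face $W'$ (hence on $\partial\mathcal S$), compares diagonal and anti-diagonal entries to see full rank of $\varrho_{\bf p}$ and of each partial transpose, and deduces the basis statement from the rank of the mixture. Your explicit computation $a_\bfp(i)\,b_\bfp(i)=1>\tfrac12=|c_i|^2$ simply makes the paper's phrase ``comparing the diagonal and anti-diagonal entries'' precise, and your remark that the remaining partial conjugations are obtained from the listed $\Gamma_\bullet$ by an overall transpose (which preserves rank) closes the only small gap the paper leaves implicit.
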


\section{Construction}

We are going to construct entanglement witnesses whose dual faces have only finitely many extreme points.
The main idea is to apply the bit-flip operators and partial transposes to the witness
$$
W=X((0,0,0,\sqrt 8u^{-1}), (0,0,0,\sqrt 8 u),(1,1,-1,1)),
$$
to get witnesses
with the same anti-diagonals as $W$ but different diagonals from $W$.
We will mix some of them to get entanglement witnesses whose dual faces have exactly ten extreme points.
Recall the relation
$$
(W_1+W_2)^\prime=W_1^\prime\cap W_2^\prime
$$
for entanglement witnesses $W_1$ and $W_2$.

We denote by $\sigma$ the bit-flip operator
$\left(\begin{matrix}0&1\\1&0\end{matrix}\right)$ on $\mathbb C^2$, and define
the operators on $\mathbb C^2\ot\mathbb C^2\ot\mathbb C^2$ by
$$
\sigma_A=\sigma\ot I\ot I,\qquad \sigma_B=I\ot \sigma\ot I,\qquad \sigma_C=I\ot I\ot\sigma,
$$
where $I$ denotes the identity operator. We note that
$$
\begin{aligned}
\sigma_A X(a,b,c)\sigma_A&=X((b_4,b_3,b_2,b_1),(a_4,a_3,a_2,a_1),(\bar c_4,\bar c_3,\bar c_2,\bar c_1)),  \\
\sigma_B X(a,b,c)\sigma_B&=X((a_3,a_4,a_1,a_2),(b_3,b_4,b_1,b_2),(c_3,c_4,c_1,c_2)),  \\
\sigma_C X(a,b,c)\sigma_C&=X((a_2,a_1,a_4,a_3),(b_2,b_1,b_4,b_3),(c_2,c_1,c_4,c_3)).
\end{aligned}
$$
Therefore, we have
$$
\begin{aligned}
W_A&:=\sigma_A W^{\Gamma_A}\sigma_A=X((\sqrt 8u,0,0,0), (\sqrt 8 u^{-1},0,0,0),(1,1,-1,1)),\\
W_B&:=\sigma_B W^{\Gamma_B}\sigma_B=X((0,\sqrt 8u^{-1},0,0), (0,\sqrt 8 u,0,0),(1,1,-1,1)),\\
W_C&:=\sigma_C W^{\Gamma_C}\sigma_C=X((0,0,\sqrt 8u^{-1},0), (0,0,\sqrt 8 u,0),(1,1,-1,1)).
\end{aligned}
$$

Now, we are looking for product vectors $|\xi\ran=|x\ran\ot|y\ran\ot|z\ran$
which are killed by the witness $W_A$. We note that
$$
\begin{aligned}
\lan W_A, |\xi\ran\lan\xi|\ran
&=\lan \sigma_A W^{\Gamma_A}\sigma_A,|\xi\ran\lan\xi|\ran\\
&=\lan W^{\Gamma_A},|\sigma_A\xi\ran\lan\sigma_A\xi|\ran\\
&=\lan W^{\Gamma_A}, |\sigma x\ran| y\ran| z\ran\lan\sigma x|\lan y|\lan z|\ran\\
&=\lan W, |\sigma\bar x\ran| y\ran| z\ran\lan\sigma\bar x|\lan y|\lan z|\ran.
\end{aligned}
$$
Motivated by this relation, we write
$$
|\xi_A\ran=  |\sigma\bar x\ran\ot| y\ran\ot| z\ran,\quad
|\xi_B\ran=  | x\ran\ot|\sigma\bar y\ran\ot| z\ran,\quad
|\xi_C\ran=  | x\ran\ot| y\ran\ot|\sigma\bar z\ran,
$$
for a product vector $|\xi\ran=|x\ran\ot|y\ran\ot|z\ran$.
Then we see that a product vector $|\xi\ran$ satisfies the relation $|\xi\ran\in P_{W_A}$ if and only if
$|\xi_A\ran\in P_W$, and similarly for $B$ and $C$ parties.
Therefore, we see by (\ref{pzw}) that the set $\Pz_{W_A}$, $\Pz_{W_B}$ and $\Pz_{W_C}$ are given by
$$
\begin{aligned}
\Pz_{W_A}:=E_{\square 01}\cup E_{\square 10}\cup E_{1\square 0}\cup E_{0\square 1}\cup E_{10\square}\cup E_{01\square},\\
\Pz_{W_B}:=E_{\square 11}\cup E_{\square 00}\cup E_{0\square 0}\cup E_{1\square 1}\cup E_{01\square}\cup E_{10\square},\\
\Pz_{W_C}:=E_{\square 00}\cup E_{\square 11}\cup E_{0\square 1}\cup E_{1\square 0}\cup E_{00\square}\cup E_{11\square}.
\end{aligned}
$$

Next, we look for product vectors in $P_{W_A}$ without zero entry.
To do this, we first note the relation
$$
\sigma(p,\bar\alpha)^\ttt=(\bar\alpha,p)^\ttt=p\bar\alpha (p^{-1},\alpha)^\ttt,
$$
from which we have
$$
|\eta_j({\bf p})_A\ran=|\eta_j(p^{-1},q,r)\ran,\qquad j=1,2,\dots,8,
$$
up to scalar multiplications, and similar relations for $|\eta_j({\bf p})_B\ran$ and $|\eta_j({\bf p})_C\ran$;
$$
\begin{aligned}
|\eta_j({\bf p})_B\ran&=|\eta_j(p,q^{-1},r)\ran,\qquad j=1,2,\dots,8,\\
|\eta_j({\bf p})_C\ran&=|\eta_j(p,q,r^{-1})\ran,\qquad j=1,2,\dots,8.
\end{aligned}
$$
Now, we define surfaces
$$
\begin{aligned}
S_A&=\{(p,q,r): (p^{-1},q,r)\in S\}=\{(p,q,r)\in\mathbb R^3_+: p^{-1}q^{-1}r^{-1}=u\},\\\
S_B&=\{(p,q,r): (p,q^{-1},r)\in S\}=\{(p,q,r)\in\mathbb R^3_+: pqr^{-1}=u\},\\
S_C&=\{(p,q,r): (p,q,r^{-1})\in S\}=\{(p,q,r)\in\mathbb R^3_+: pq^{-1}r=u\}.
\end{aligned}
$$
Then we have
$$
\Pn_{W_A}=F_{S_A},\qquad \Pn_{W_B}=F_{S_B},\qquad  \Pn_{W_C}=F_{S_C}.
$$

Now, we consider the intersections of two sets to get
$$
\begin{aligned}
P_W\cap P_{W_A}&=E_{\square 01}\cup E_{\square 10}\cup F_{S\cap S_A},\\
P_W\cap P_{W_B}&=E_{0\square 0}\cup E_{1\square 1}\cup F_{S\cap S_B},\\
P_W\cap P_{W_C}&=E_{00\square }\cup E_{11\square }\cup F_{S\cap S_C}.
\end{aligned}
$$
We also have the following:
$$
\begin{aligned}
P_{W_A}\cap P_{W_B}&=E_{01\square}\cup E_{10\square}\cup F_{S_A\cap S_B},\\
P_{W_B}\cap P_{W_C}&=E_{\square 00}\cup E_{\square 11}\cup F_{S_B\cap S_C},\\
P_{W_C}\cap P_{W_A}&=E_{0\square 1}\cup E_{1\square 0}\cup F_{S_C\cap S_A}.
\end{aligned}
$$
We note that the sets $P_W$, $P_{W_A}$, $P_{W_B}$ and $P_{W_C}$ are parameterized by six spheres and eight two-dimensional
surfaces, respectively. On the other hand, an intersection of two of them is parameterized by two spheres and eight curves.
To be more precise, we note that
$$
S\cap S_A=\{(1,q,q^{-1}u^{-1}): 0<q<\infty\},
$$
and so we see that sides of the eight curves $F_{S\cap S_A}$ approach to the sphere given by $E_{\square 01}$
and the other sides approach to
the sphere given by $E_{\square 10}$. The same description works for other intersections.

Now, we consider the intersection of three of them. For example, we have
$$
\begin{aligned}
P_W\cap P_{W_A}\cap P_{W_B}
&=(P_W\cap P_{W_A})\cap(P_W\cap P_{W_B})\\
&=[(E_{\square 01}\cup E_{\square 10})\cap (E_{0\square 0}\cup E_{1\square 1})]\cup F_{S\cap S_A\cap S_B}.
\end{aligned}
$$
We see that $(E_{\square 01}\cup E_{\square 10})\cap (E_{0\square 0}\cup E_{1\square 1})$ consists of two product vectors
$|1\ran\ot |0\ran \ot |1\ran$ and $|0\ran\ot |1\ran \ot |0\ran$. We also see that
$(p,q,r)\in S\cap S_A\cap S_B$ if and only if the relation
$$
pq^{-1}r^{-1}=u,\qquad p^{-1}q^{-1}r^{-1}=u,\qquad pqr^{-1}=u
$$
holds if and only if $\bfp=(1,1,u^{-1})$. Therefore, $F_{S\cap S_A\cap S_B}$ consists of eight product vectors. In short,
the set $P_W\cap P_{W_A}\cap P_{W_B}$ consists of the following ten product vectors:
\begin{equation}\label{ten}
P_W\cap P_{W_A}\cap P_{W_B}=\{|010\ran, |101\ran, |\eta_j(1,1,u^{-1})\ran\ (j=1,2,\dots,8)\}.
\end{equation}
The state $\varrho_\bfp$ in (\ref{varrho_p}) with $\bfp=(1,1,u^{-1})$
is given by
$$
\varrho_\bfp=X((u^{-1},u,u^{-1},u), (u,u^{-1},u,u^{-1}),\textstyle{\frac 1{\sqrt2}}(-1, -1,+1,-1)),
$$
and it is easily checked that $\lan W+W_A+W_B, \varrho_\bfp\ran=0$.

By the exactly same way, we have
$$
\begin{aligned}
P_W\cap P_{W_B}\cap P_{W_C}&=\{|000\ran, |111\ran, |\eta_j(u,1,1)\ran\ (j=1,2,\dots,8)\},\\
P_W\cap P_{W_C}\cap P_{W_A}&=\{|001\ran, |110\ran, |\eta_j(1,u^{-1},1)\ran\ (j=1,2,\dots,8)\},\\
P_{W_A}\cap P_{W_B}\cap P_{W_C}&=\{|011\ran, |100\ran, |\eta_j(u,u^{-1},u^{-1})\ran\ (j=1,2,\dots,8)\}.
\end{aligned}
$$
We also have
$$
\begin{aligned}
\varrho_{(u,1,1)}&=X((u,u,u,u), (u^{-1},u^{-1},u^{-1},u^{-1}),\textstyle{\frac 1{\sqrt2}}(-1, -1,+1,-1)),\\
\varrho_{(1,u^{-1},1)}&=X((u^{-1},u^{-1},u,u), (u,u,u^{-1},u^{-1}),\textstyle{\frac 1{\sqrt2}}(-1, -1,+1,-1)),\\
\varrho_{(u,u^{-1},u^{-1})}&=X((u,u^{-1},u^{-1},u), (u^{-1},u,u,u^{-1}),\textstyle{\frac 1{\sqrt2}}(-1, -1,+1,-1)).
\end{aligned}
$$

So far, we have seen that the dual face
$$
(W+W_A+W_B)^\prime=W^\prime\cap W_A^\prime\cap W_B^\prime
$$
of the witness $W+W_A+W_B$ has exactly ten pure product states.
So, this face is the convex hull of the ten extreme points. It is well known that the convex hull
of finitely many points on an affine manifold in a real vector space is a simplex
if and only if they are linearly independent.

\begin{theorem}\label{main}
Suppose that $F$ is the dual face of one of the entanglement witnesses
$$
W+W_A+W_B,\quad W+W_B+W_C,\quad W+W_C+W_A, \quad W_A+W_B+W_C.
$$
Then we have the following:
\begin{enumerate}
\item[(i)]
$F$ is affinely isomorphic to the $9$-dimensional simplex with ten extreme points,
\item[(ii)]
every point of $F$ is a separable state with unique decomposition into the sum of pure product states,
\item[(iii)]
every interior point of $F$ is a boundary separable state with full ranks, and has length ten.
\end{enumerate}
\end{theorem}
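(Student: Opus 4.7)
The plan is to reduce all three statements to a single assertion: the ten rank-one projections listed in (\ref{ten}) are linearly independent in the space of Hermitian $8\times 8$ matrices. (The arguments for the remaining three witnesses are identical after the obvious substitutions.) Granted this independence, the convex hull of the ten pure product states is a $9$-simplex with exactly those ten vertices, which is (i). For (ii), given any $\varrho\in F$ and any convex decomposition $\varrho=\sum_i\mu_i|\psi_i\ran\lan\psi_i|$ into pure product states, the identity $\lan W+W_A+W_B,\varrho\ran=0$ together with the nonnegativity of each summand $\lan W+W_A+W_B,|\psi_i\ran\lan\psi_i|\ran$ forces every $|\psi_i\ran\in P_W\cap P_{W_A}\cap P_{W_B}$; hence the decomposition runs over the ten extreme points, and uniqueness is immediate from the simplex structure.

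The main obstacle is the linear independence, which I would verify by a direct entrywise computation. Proposition \ref{full-ranks} ensures the eight vectors $|\eta_j(\bfp)\ran$ with $\bfp=(1,1,u^{-1})$ form a basis of $\mathbb C^8$, but this does not immediately yield linear independence of the corresponding rank-one projections $\varrho_j$ in the matrix space. Since $|\alpha_j|=|\beta_j|=|\gamma_j|=1$, all eight $\varrho_j$ share the same diagonal, so the information must come from off-diagonal entries. Reading off the entries of $\sum_{j=1}^8 c_j\varrho_j$ at the three positions $(000,001)$, $(000,010)$, $(000,100)$, and exploiting the relations $\omega^5=-\omega$, $\omega^7=-\omega^3$ together with the $\mathbb R$-linear independence of $\omega$ and $\omega^3$, one obtains six real equations that decouple into the two blocks $\{1,2,3,4\}$ and $\{5,6,7,8\}$ and force $c_1=c_2=c_3=c_4$ and $c_5=c_6=c_7=c_8$; the anti-diagonal entry at $(000,111)$, via $\omega^3=i\omega$, then yields $c_1=c_5=0$. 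To incorporate the two computational-basis projections $|010\ran\lan 010|$ and $|101\ran\lan 101|$, note that they are diagonal with support at $(010,010)$ and $(101,101)$ respectively and contribute nothing to the above off-diagonal equations; the same argument therefore gives $c_j=0$ for $j=1,\dots,8$ in the extended combination, after which $d_1=d_2=0$ is immediate from the orthogonality of $|010\ran$ and $|101\ran$.

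For (iii), let $\varrho$ be a relative interior point of $F$, written uniquely as $\varrho=\sum_{k=1}^{10}\lambda_k\varrho_k$ with all $\lambda_k>0$. Since $F$ is a proper face of the convex set $\mathcal S$, $\varrho$ lies on the boundary of $\mathcal S$. For each partial conjugation $\gamma$, the partial transpose $\varrho^\gamma=\sum_k\lambda_k|\psi_k^\gamma\ran\lan\psi_k^\gamma|$ is a positive combination whose terms include the eight rank-one projections onto $|\eta_j(\bfp)^\gamma\ran$; the latter span $\mathbb C^8$ by Proposition \ref{full-ranks}, so $\varrho^\gamma$ has full rank. For the length, any pure-product decomposition of $\varrho$ must use only the ten extreme points by the argument for (ii), and uniqueness in the simplex forces all ten weights to be strictly positive, giving $\ell(\varrho)=10$. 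The same reasoning covers the remaining three witnesses with the corresponding $\bfp$; in the case $W_A+W_B+W_C$ we have $\bfp=(u,u^{-1},u^{-1})\notin S$, so Proposition \ref{full-ranks} does not directly apply, but the required full-rank properties of $\varrho_{(u,u^{-1},u^{-1})}$ and its partial transposes follow from the X-state determinant formula $\prod_i(a_ib_i-|c_i|^2)=(1/2)^4>0$.
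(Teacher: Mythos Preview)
Your proof is correct, and in places more thorough than the paper's, but the key linear-independence step is handled differently. The paper's argument is a one-line trick: assuming
\[
a|010\ran\lan 010|+b|101\ran\lan 101|+\sum_{j=1}^8 c_j|\eta_j\ran\lan\eta_j|=0,
\]
apply both sides to the vector $|000\ran$. The two computational-basis terms vanish since $\lan 010|000\ran=\lan 101|000\ran=0$, leaving $\sum_j c_j\lan\eta_j|000\ran\,|\eta_j\ran=0$; now the fact (Proposition~\ref{full-ranks}) that $\{|\eta_j\ran\}$ is a basis of $\mathbb C^8$ forces each coefficient $c_j\lan\eta_j|000\ran$ to vanish, and $\lan\eta_j|000\ran\neq 0$ finishes it. Your entrywise computation at the positions $(000,001)$, $(000,010)$, $(000,100)$, $(000,111)$ reaches the same conclusion but is considerably longer and requires tracking the specific phase patterns of the $\Lambda_j$. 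The paper's method buys brevity and makes the role of Proposition~\ref{full-ranks} transparent; your method is more self-contained and, as you note, does not actually rely on $\bfp\in S$.

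On that last point you have in fact spotted and patched a genuine soft spot in the paper's presentation: the paper proves only the case $W+W_A+W_B$ and asserts the other three are analogous, but for $W_A+W_B+W_C$ one has $\bfp=(u,u^{-1},u^{-1})\notin S$, so Proposition~\ref{full-ranks} as stated does not literally apply. Your direct verification via $a_ib_i-|c_i|^2=1-\tfrac12>0$ for the listed $\xx$-state $\varrho_{(u,u^{-1},u^{-1})}$ (and its partial transposes, which only permute or conjugate the $c_i$) cleanly closes this, and your entrywise independence argument likewise works uniformly in $\bfp$. Your treatment of (ii), spelling out why any pure-product decomposition of $\varrho\in F$ must already use only the ten listed vectors, is also more explicit than the paper's bare ``(ii) follows from (i)''.
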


\begin{proof}
We will prove for the entanglement $W+W_A+W_B$ only.
For the statement (i), it remains to show that the ten pure product states arising from
ten product vectors in (\ref{ten}) are linearly independent in the real vector space of all three qubit self-adjoint
matrices. Write $|\eta_j\ran=|\eta_j(1,1,u^{-1})\ran$ for $j=1,2,\dots,8$, and suppose that
$$
a|010\ran\lan 010|+b|101\ran\lan 101|+\sum_{j=1}^8c_j|\eta_j\ran\lan\eta_j|=0,
$$
with real numbers $a,b$ and $c_j$. Then we have
$$
0=a|010\ran\lan 010|000\ran+b|101\ran\lan 101|000\ran+\sum_{j=1}^8c_j|\eta_j\ran\lan\eta_j|000\ran
=\sum_{j=1}^8c_j|\eta_j\ran\lan\eta_j|000\ran.
$$
We have already seen that eight vectors $\{|\eta_j\ran:j=1,\dots,8\}$ form a basis, and so
$c_j\lan\eta_j|000\ran=0$ for each $j=1,2,\dots,8$. Since $|\eta_j\ran$ has no zero entry, we conclude that
$\lan\eta_j|000\ran\neq 0$ and $c_j=0$ for each $j=1,2,\dots,8$.
The second statement (ii) follows from (i).
For (iii), we note that every interior point of $F$ must be expressed by
$$
a|010\ran\lan010|+ b|101\ran\lan 101| +\sum_{j=1}^8 c_j |\eta_j\ran\lan\eta_j|
$$
 in a unique way with $a>0$, $b>0$ and $c_j>0$ for each $j=1,2,\dots,8$. Therefore, the result follows from
Proposition \ref{full-ranks}.
\end{proof}

In the statement (iii) of Theorem \ref{main}, some boundary points of $F$ have still full ranks as well as interior points.
In fact, every nine choice among ten
product vectors spans the whole space $\mathbb C^2\ot\mathbb
C^2\ot\mathbb C^2$. This is also the case when ten product vectors are
replaced by their partial conjugates. If we take eight product
vectors among ten, then they sometimes span the whole space, and
sometimes do not span the whole space. We show this for ten product vectors
$$
\begin{aligned}
\eta_1&=(1,+\omega^3)^\ttt \otimes (1,+\omega^1)^\ttt \otimes (u^{-1},+\omega^7)^\ttt,\\
\eta_2&=(1,+\omega^3)^\ttt \otimes (1,-\omega^1)^\ttt \otimes (u^{-1},-\omega^7)^\ttt,\\
\eta_3&=(1,-\omega^3)^\ttt \otimes (1,+\omega^1)^\ttt \otimes (u^{-1},-\omega^7)^\ttt,\\
\eta_4&=(1,-\omega^3)^\ttt \otimes (1,-\omega^1)^\ttt \otimes (u^{-1},+\omega^7)^\ttt,\\
\eta_j&=\bar\eta_{j-4},\qquad j=5,6,7,8,
\end{aligned}
$$
in (\ref{ten}) together with $|010\ran$ and $|101\ran$.

Motivated by the relation
$\lan (p,\alpha)^\ttt|(p^{-1},-\alpha)^\ttt\ran=0$ for $p>0$ and $\alpha\in\mathbb T$,
we define
$$
\begin{aligned}
\zeta_1&=(1,+\omega^3)^\ttt \otimes (1,-\omega^1)^\ttt \otimes (u,+\omega^7)^\ttt,\\
\zeta_2&=(1,+\omega^3)^\ttt \otimes (1,+\omega^1)^\ttt \otimes (u,-\omega^7)^\ttt,\\
\zeta_3&=(1,-\omega^3)^\ttt \otimes (1,-\omega^1)^\ttt \otimes (u,-\omega^7)^\ttt,\\
\zeta_4&=(1,-\omega^3)^\ttt \otimes (1,+\omega^1)^\ttt \otimes (u,+\omega^7)^\ttt,\\
\zeta_j&=\bar\zeta_{j-4},\qquad j=5,6,7,8.
\end{aligned}
$$
Then  the \lq coefficient matrix\rq\ $L:=[\lan \zeta_i|\eta_j\ran]_{i,j=1,2,\dots,8}$
of $\{|\eta_j\ran: j=1,2,\dots.8\}$ with respect to $\{|\zeta_i\ran\}$ is given by
\begin{equation}\label{coeff}
L=
\left(\begin{matrix}0&K\\  \bar K&0\end{matrix}\right)\in M_2(M_4),
\quad {\text{\rm with}}\
 K=2\sqrt 2 \omega^3
\left(\begin{matrix}
+&-&-&-\\
-&+&-&-\\
-&-&+&-\\
-&-&-&+
\end{matrix}\right),
\end{equation}
where $+$ and $-$ denote $+1$ and $-1$, respectively.
Because $\det K\neq 0$, we see that both $\{|\eta_j\ran\}$ and $\{|\zeta_j\ran\}$ are linearly independent.
If we take partial conjugates of $|\eta_j\ran$ and $|\zeta_i\ran$ in the $A$, $B$ and $C$ parties respectively,
then $K/2\sqrt 2$ is replaced by
$$
\omega^{-3}
\left(\begin{matrix}
-&+&-&-\\
+&-&-&-\\
-&-&-&+\\
-&-&+&-\end{matrix}\right),\quad
\omega^{-3} \left(\begin{matrix}
-&-&+&-\\
-&-&-&+\\
+&-&-&-\\
-&+&-&-\end{matrix}\right)\quad {\text{\rm and}}\quad
\omega^{-3}
\left(\begin{matrix}
-&-&-&+\\
-&-&+&-\\
-&+&-&-\\
+&-&-&-\end{matrix}\right),
$$
respectively. Therefore, all the partial conjugates of  $\{|\eta_j\ran\}$ are linearly independent. This gives another proof
of Proposition \ref{full-ranks}.

If one of $|\eta_j\ran$ is replaced by $|010\ran$ and $|101\ran$ respectively,  then one column of $L$ is replaced by
$$
u(\omega^3,\omega^7,\omega^3,\omega^7,\omega^5,\omega^1,\omega^5,\omega^1)^\ttt\quad {\text{\rm and}}\quad
(\omega^6,\omega^2,\omega^6,\omega^2,\omega^2,\omega^6,\omega^2,\omega^6)^\ttt,
$$
respectively. They are still non-singular, and the eight product vectors we have chosen are linearly independent.
Especially, we see that every nine choice among ten product vectors spans the whole space $\mathbb C^2\ot\mathbb
C^2\ot\mathbb C^2$. If all the  $|\eta_j\ran$'s and $|\zeta_i\ran$'s are replaced by partial conjugate,
then we have the same conclusion.
Therefore, we have the following:

\begin{proposition}\label{nine}
Every nine choice among ten product vectors in
$$
P_W\cap P_{W_A}\cap P_{W_B},\quad
P_W\cap P_{W_B}\cap P_{W_C},\quad
P_W\cap P_{W_C}\cap P_{W_A}\quad {\text{\rm or}}\quad
P_{W_A}\cap P_{W_B}\cap P_{W_C}
$$
spans the whole space $\mathbb C^2\ot\mathbb C^2\ot\mathbb C^2$.
\end{proposition}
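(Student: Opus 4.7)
The plan is to reduce Proposition \ref{nine} to non-singularity of certain explicit $8\times 8$ matrices obtained from the coefficient matrix $L$ in (\ref{coeff}). Fix the first intersection $P_W\cap P_{W_A}\cap P_{W_B}$, whose ten product vectors are $|010\ran$, $|101\ran$ and $|\eta_j\ran=|\eta_j(1,1,u^{-1})\ran$ for $j=1,\dots,8$. A choice of nine among these ten is obtained by removing one vector, so I would split into cases accordingly.

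The easy case is the removal of either $|010\ran$ or $|101\ran$: the remaining nine include all eight $|\eta_j\ran$, which already form a basis of $\mathbb C^2\ot\mathbb C^2\ot\mathbb C^2$ by Proposition \ref{full-ranks}, so spanning is immediate. The substantive case is the removal of some $|\eta_k\ran$, leaving $\{|\eta_j\ran: j\ne k\}\cup\{|010\ran,|101\ran\}$. To show that these nine span, it suffices to produce eight linearly independent vectors among them, for instance $\{|\eta_j\ran:j\ne k\}\cup\{|010\ran\}$, and failing that the variant with $|101\ran$.

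Linear independence of such a replacement set is equivalent to non-singularity of the matrix obtained from $L$ by substituting its $k$-th column with $[\lan\zeta_i|010\ran]_i$ or $[\lan\zeta_i|101\ran]_i$. I would compute these replacement columns directly, getting the two vectors
$$
u(\omega^3,\omega^7,\omega^3,\omega^7,\omega^5,\omega^1,\omega^5,\omega^1)^\ttt\quad{\text{\rm and}}\quad (\omega^6,\omega^2,\omega^6,\omega^2,\omega^2,\omega^6,\omega^2,\omega^6)^\ttt
$$
displayed in the excerpt, and then expand the modified determinant along the replaced column. Thanks to the block anti-diagonal structure of $L$, each resulting cofactor splits into a minor of $K$ and a minor of $\bar K$, and the $\pm 1$ pattern of $K$ makes these minors easy to evaluate by Laplace expansion. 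Non-singularity of at least one of the two replacements for each $k=1,\dots,8$ then yields the claim for the first intersection.

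For the other three intersections in the statement, the argument is identical after applying partial conjugates to $\{|\eta_j\ran\}$ and $\{|\zeta_i\ran\}$. As observed just before the statement, partial conjugation replaces $K$ by one of the three sign matrices displayed there, each of which is still invertible with the same block pattern; the same cofactor computation then carries over verbatim. The main obstacle is bookkeeping, since there are eight values of $k$ and four intersections; but the symmetries of these sign matrices under simultaneous row/column permutations (and complex conjugation) should reduce the verification to a very small number of representative cases rather than thirty-two separate determinant computations.
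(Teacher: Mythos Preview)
Your plan for the first intersection $P_W\cap P_{W_A}\cap P_{W_B}$ matches the paper's argument exactly: split on which vector is removed, invoke Proposition~\ref{full-ranks} in the easy case, and in the nontrivial case (removing some $|\eta_k\ran$) replace the $k$-th column of $L$ by one of the two displayed coefficient vectors and verify non-singularity. The paper simply asserts that the modified matrices are non-singular; your proposed cofactor expansion along the replaced column, exploiting the block anti-diagonal shape of $L$, is a reasonable way to carry that out.

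There is, however, a misreading in your last paragraph. The three alternative sign matrices for $K$ displayed just before the proposition arise from applying partial conjugation to the \emph{same} family $\{|\eta_j(1,1,u^{-1})\ran\}$ and the corresponding $\{|\zeta_i\ran\}$; their purpose is to give an alternative proof of Proposition~\ref{full-ranks} (full rank of all partial transposes of $\varrho_{\bfp}$), not to pass to the other three intersections. The remaining intersections involve $|\eta_j\ran$ at the \emph{different} points $(u,1,1)$, $(1,u^{-1},1)$, $(u,u^{-1},u^{-1})$, together with different pairs of standard basis vectors ($|000\ran,|111\ran$, etc.), and partial conjugation fixes $\bfp$, so it does not carry one of these configurations to another. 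The paper treats the remaining three cases only implicitly, by analogy: one defines the analogous $|\zeta_i\ran$'s, recomputes $L$ and the two replacement columns, and checks non-singularity each time. Your overall strategy is therefore correct, but the bookkeeping for the other three intersections cannot be short-circuited via partial conjugation; each requires its own (entirely parallel) computation.
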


We give a geometric interpretation of Proposition \ref{nine}. We begin with the real vector space of all
self-adjoint three qubit matrices, and the $10$-dimensional subspace
generated by ten pure product states in Theorem \ref{main}. We consider the $9$-dimensional affine manifold $H$
given by the condition of trace one. Then the dual face $F$ in Theorem \ref{main}
coincides with the convex set ${\mathcal S}_H={\mathcal S}\cap H$ of all separable states on the affine manifold $H$.
The boundary of ${\mathcal S}_H$
consists of maximal faces isomorphic to the $8$-dimensional simplex.
We also consider the set ${\mathcal T}_H={\mathcal T}\cap H$ of all PPT states on $H$.
Proposition \ref{nine} tells us that
the interior of these maximal faces
are contained in the interior of ${\mathcal T}_H$. Take an interior point $\varrho_0$
of ${\mathcal S}_H$ and an interior point $\varrho_1$ of a maximal face of ${\mathcal S}_H$. Take also the line
segment $\varrho_t=(1-t)\varrho_0+t\varrho_1$ from $\varrho_0$ to $\varrho_1$.
Because $\varrho_1$ is an interior point of ${\mathcal T}_H$, there exists $t>1$ so that $\varrho_t\in {\mathcal T}_H$,
which is a PPT entangled state. In this way, we have bunch of PPT entanglement, as in
\cite{{ha-kye-sep-face},{ha-kye-multi-unique}}.
Note that $\varrho_1$ is a separable state of length $9$ with full ranks and unique decomposition.

The same argument may hold even if we take $\varrho_1$ on the boundary of a maximal face.
In fact, if we take the face generated by eight pure product states corresponding to $|\eta_j\ran$
with $j=1,2,\dots,8$, then this face is on the boundary of a maximal face, but still in the interior
of ${\mathcal T}_H$ by Proposition \ref{full-ranks}. In this case, $\varrho_1$ is a separable state
of length $8$ with full ranks and unique decomposition.
But, this is not the case in general. Suppose that two of $|\eta_1\ran, \dots,|\eta_4\ran$ are replaced by
$|010\ran$, $|101\ran$ to get the eight product vectors. Then we see that the matrix $\bar K$ should be replaced by a matrix
with two columns of same direction, while the lower-right corner is still zero. This implies that
the coefficient matrix with respect to $\{|\zeta_j\ran\}$ is singular, and those eight product vectors are
linearly dependent. The convex hull of the corresponding eight pure product states generate
a face isomorphic to the $7$-dimensional simplex which is contained in the boundary of ${\mathcal T}_H$,
and so the line segment from $\varrho_0$ to an interior point of this face cannot be extended
in ${\mathcal T}_H$. If we replace one of  $|\eta_1\ran, \dots,|\eta_4\ran$ by $|010\ran$ and
replace one of $|\eta_5\ran, \dots,|\eta_8\ran$ by $|101\ran$ then we can see that the resulting coefficient matrix
is non-singular.


\end{document}